\newcommand\bA{\boldsymbol A}
\newcommand\bB{\boldsymbol B}
\newcommand\bC{\boldsymbol C}
\newcommand\bD{\boldsymbol D}
\newcommand\bG{\boldsymbol G}
\newcommand\bh{\boldsymbol h}
\newcommand\bI{\boldsymbol I}
\newcommand\bL{\boldsymbol L}
\newcommand\bQ{\boldsymbol Q}
\newcommand\bU{\boldsymbol U}
\newcommand\bV{\boldsymbol V}
\newcommand\bW{\boldsymbol W}
\newcommand\bX{\boldsymbol X}
\newcommand\bx{\boldsymbol x}
\newcommand\bZ{\boldsymbol Z}
\newcommand\bz{\boldsymbol z}
\newcommand\bS{\boldsymbol S}
\newcommand\bLambda{\boldsymbol \varLambda}
\newcommand\bmu{\boldsymbol \mu}
\newcommand\bPhi{\boldsymbol \varPhi}
\newcommand\bPsi{\boldsymbol \varPsi}
\newcommand\bRho{\boldsymbol P}
\newcommand\bSigma{\boldsymbol \varSigma}
\newcommand\bTheta{\boldsymbol \varTheta}
\newcommand\cor{\text{cor}}
\newcommand\cov{\text{cov}}
\newcommand\diag{\text{diag}}
\newcommand\expect{\text{E}}
\newcommand\trace{\text{tr}}
\newcommand\var{\text{var}}
\newtheorem{prop}{Proposition}
\newcommand{\eqcite}[1]{Equation~(\ref{#1})}
\newcommand{\tabcite}[1]{Table~(\ref{#1})}
\definecolor{revision}{rgb}{1,0,0} 
\begin{document}

\date{2 December 2015; last revision 17 December 2016}

\title{
Optimal Whitening and Decorrelation
}

\newcommand\myaddress[2][]{\relax
    {\noindent{\small$^{#1}$#2}}}

\author{Agnan Kessy$^{1}$, Alex Lewin$^{2}$,
and Korbinian Strimmer$^{3}$ 
\thanks{To whom correspondence should be addressed. Email: {\tt k.strimmer@imperial.ac.uk  } 
}
}

\maketitle

\myaddress[1]{Statistics Section, Dept. of Mathematics, Imperial College London, 
  South Kensington Campus, London SW7 2AZ, UK.\\}
\myaddress[2]{Dept. of Mathematics, Brunel University London, Kingstone Lane, Uxbridge UB8 3PH, UK.\\}
\myaddress[3]{Epidemiology and Biostatistics, School of Public Health,
  Imperial College London, Norfolk Place, London W2 1PG, UK.}

\newpage

\begin{abstract} 

Whitening, or sphering, is a common preprocessing step in statistical analysis 
to transform random variables to orthogonality.  However, due to rotational 
freedom there are infinitely many possible whitening procedures.  Consequently, 
there is a diverse range of sphering methods in use, for example based on principal 
component analysis (PCA), Cholesky matrix decomposition and 
{ zero-phase component analysis (ZCA)}, among others.

Here we provide an overview of the underlying theory and discuss 
five natural whitening procedures. Subsequently, we  
demonstrate that investigating the cross-covariance and the cross-correlation 
matrix between sphered and original variables allows to break the rotational 
invariance  and to identify optimal { whitening} transformations.  As a result 
we { recommend} two particular approaches: { ZCA-cor } whitening to 
produce sphered variables that are maximally similar to the original variables,
and { PCA-cor whitening to obtain sphered variables that
maximally compress the original variables.}

\vspace{1cm}

\underline{Keywords}:  Whitening, decorrelation, ZCA-Mahalanobis transformation, 
principal components analysis, Cholesky decomposition, CAT score, CAR score.

\end{abstract}

\newpage

\section{Introduction}

{

\emph{Whitening}, or \emph{sphering}, is a linear transformation 
that converts a $d$-dimensional random vector  $\bx=(x_1, \ldots, x_d)^T$ with mean 
$\expect(\bx)=\bmu = (\mu_1, \ldots, \mu_d)^T$ and positive definite $d \times d$ covariance matrix $\var(\bx)=\bSigma$ 
into a new random vector 
\begin{equation}
\bz = (z_1, \ldots, z_d)^T = \bW \bx 
\label{eq:whitening}
\end{equation}
 of the same dimension $d$ and with unit diagonal ``white'' covariance $\var(\bz)=\bI$.  The square $d \times d$ matrix $\bW$ is called the whitening matrix.
As orthogonality among random variables greatly simplifies multivariate data analysis
both from a computational and a statistical standpoint, whitening is a critically important tool, most often
employed in preprocessing but also as part of modeling \citep[e.g.][]{ZS09,HDF2015}.
 
Whitening can be viewed as a  generalization of \emph{standardizing} a random variable which is carried out by
\begin{equation} 
\bz = \bV^{-1/2} \bx \, ,
\end{equation} 
where the 
matrix $\bV = \diag(\sigma^2_1, \ldots, \sigma^2_d)$ 
contains the variances $\var(x_i)=\sigma^2_i$. This 
results in $\var(z_i)=1$
but it does not remove correlations.  Often, standardization and whitening transformations are
also accompanied by mean-centering of $\bx$ or $\bz$ to ensure $E(\bz) = 0$,
but this is not actually necessary for producing unit variances or a white covariance.

The whitening transformation defined in \eqcite{eq:whitening} 
requires the choice of a suitable whitening matrix~$\bW$.  Since  $\var(\bz) = \bI$
it follows that  $\bW \bSigma \bW^T = \bI$  and thus $\bW \, ( \bSigma \, \bW^T \bW ) = \bW$,
which is fulfilled if $\bW$ satisfies the condition
\begin{equation}
\bW^T  \bW = \bSigma^{-1} \, .
\label{eq:wconstraint}
\end{equation}
However, unfortunately, this constraint does not uniquely determine the whitening matrix $\bW$.  Quite the contrary, given $\bSigma$ there are in fact infinitely many possible matrices $\bW$ that all satisfy \eqcite{eq:wconstraint}, and each $\bW$
leads to a whitening transformation that produces
orthogonal but different sphered random variables.

This raises two important issues: first, 
 how to best understand the differences among the various sphering transformations, and second,
how to select an optimal whitening procedure for a particular situation.  Here, we propose to address
these questions by investigating the cross-covariance and cross-correlation matrix between $\bz$ and~$\bx$.
As a result, we identify five natural whitening procedures, of which we  recommend two particular approaches for general use.
}

\section{Notation and useful identities}

In the following, we will make use of a number of covariance matrix identities: 
 the decomposition  $\bSigma = \bV^{1/2} \bRho \bV^{1/2}$ of the covariance matrix into the correlation matrix $\bRho$ and the diagonal variance matrix $\bV$, and the eigendecomposition of the covariance matrix 
$\bSigma = \bU \bLambda \bU^T$ and the eigendecomposition of the correlation matrix
$\bRho =\bG \bTheta \bG^T$, where $\bU$, $\bG$ contain the eigenvectors and $\bLambda$, $\bTheta$ the eigenvalues
of $\bSigma$, $\bRho$ respectively.
We will frequently use $\bSigma^{-1/2} = \bU \bLambda^{-1/2} \bU^T $, the unique inverse matrix square root of $\bSigma$, as well as $\bRho^{-1/2} = \bG \bTheta^{-1/2} \bG^T$, the unique inverse matrix square root of the correlation matrix. 

 Following the standard convention we assume that the eigenvalues are sorted in order from largest to smallest value.
In addition, we  recall that by construction all eigenvectors  are defined only up to a sign, i.e.\ the
columns of $\bU$ and $\bG$ can be multiplied with a factor of $-1$ and the resulting matrix is still valid. Indeed, using different numerical algorithms and software will
often result in eigendecompositions with $\bU$ and $\bG$ showing diverse column signs.

\section{Rotational freedom in whitening}

{
The constraint \eqcite{eq:wconstraint} on the whitening matrix  does not fully identify  $\bW$
but allows for rotational freedom. This becomes apparent } by writing $\bW$ in its polar decomposition
\begin{equation}
\bW  = \bQ_1 \bSigma^{-1/2} \, ,
\label{eq:polardecomp}
\end{equation}
where $\bQ_1$ is an orthogonal matrix { with $\bQ_1^T \bQ_1 = \bI_d$.
 Clearly, $\bW$ satisfies \eqcite{eq:wconstraint} regardless of the choice of $\bQ_1$.}

This implies a geometrical interpretation of whitening as a combination of 
multivariate rescaling by $\bSigma^{-1/2}$ and rotation by $\bQ_1$.
It also shows that all whitening matrices $\bW$ have the same singular values 
$\bLambda^{-1/2}$, which follows from the singular value decomposition 
$\bW = (\bQ_1 \bU)  \bLambda^{-1/2} \bU^T$ with $\bQ_1 \bU$  orthogonal.
This highlights that the fundamental rescaling is via the square root of the eigenvalues $\bLambda^{-1/2}$.
Geometrically, the whitening transformation with $\bW = \bQ_1 \bU \bLambda^{-1/2} \bU^T$ 
is a rotation $\bU^T$ followed by scaling, 
possibly followed by another rotation (depending on the choice of $\bQ_1$).

Since in many situations it is desirable to work with standardized variables $\bV^{-1/2} \bx$ 
another useful decomposition of $\bW$ that also directly demonstrates the inherent rotational freedom is
\begin{equation}
\bW = \bQ_2  \bRho^{-1/2}  \bV^{-1/2} \, ,
\label{eq:Q2sphering}
\end{equation}
where $\bQ_2$ is a further orthogonal matrix { with $\bQ_2^T \bQ_2 = \bI_d$.
Evidently, this $\bW$ also satisfies the constraint of \eqcite{eq:wconstraint} regardless of the choice of $\bQ_2$.
}

In this view, with $\bW = \bQ_2  \bG \bTheta^{-1/2} \bG^T   \bV^{-1/2}$, the variables are first scaled by the square root of the diagonal variance
matrix, then rotated by $\bG^T$, then scaled again by the square root of the eigenvalues of the correlation matrix,
and possibly rotated once more (depending on the choice of $\bQ_2$).

{ For the above two representations 
to result in the same whitening matrix $\bW$ two different rotations
$\bQ_1$ and $\bQ_2$ are required.  These are linked by} $\bQ_1 = \bQ_2 \bA$ where the matrix $\bA = \bRho^{-1/2}  \bV^{-1/2} \bSigma^{1/2} = \bRho^{1/2}  \bV^{1/2} \bSigma^{-1/2} $ is itself orthogonal.
{ Since the eigendecompositions of the covariance and the correlation matrix are not readily related to each other,
the matrix $\bA$ can unfortunately not be further simplified.}

{
\section{Cross-covariance and cross-correlation}

For studying the properties of the different whitening procedures we will now focus on}
two particularly useful quantities, namely the  \emph{cross-covariance} and \emph{cross-correlation}
matrix 
between the whitened  vector $\bz$ and the original random vector $\bx$.
{ As it turns out, these are closely linked to the rotation matrices $\bQ_1$ and $\bQ_2$ encountered in
the above two decompositions of $\bW$ (\eqcite{eq:polardecomp} and \eqcite{eq:Q2sphering}).}

The cross-covariance matrix $\bPhi$ between $\bz$ and $\bx$ is given by 
\begin{equation}
\begin{split}
\bPhi = (\phi_{ij}) &= 
  \cov( \bz, \bx ) = \cov( \bW \bx, \bx ) \\     
   &= \bW \bSigma = \bQ_1  \bSigma^{1/2}  \, .
\end{split}
\label{eq:crosscov}
\end{equation}
{ Likewise, } the cross-correlation matrix is
\begin{equation}
\begin{split}
 \bPsi = (\psi_{ij}) & = 
 \cor( \bz, \bx ) = \bPhi \, \bV^{-1/2} \\
      & = \bQ_2 \bA \, \bSigma^{1/2}  \bV^{-1/2} = \bQ_2 \bRho^{1/2} \, .
\end{split}
\label{eq:crosscor}
\end{equation}
{ Thus, we find that the rotational freedom inherent in $\bW$,
which is represented by the matrices $\bQ_1$ and  $\bQ_2$,
is directly reflected in the corresponding cross-covariance $\bPhi$ and cross-correlation $\bPsi$ between $\bz$ and $\bx$. 
This provides the leverage that we will use to select and discriminate among whitening transformations
by appropriately  choosing or constraining 
$\bPhi$ or $\bPsi$. 

As can be seen from \eqcite{eq:crosscov} and  \eqcite{eq:crosscor}, both $\bPhi$ and $\bPsi$ are in general not symmetric, unless
$\bQ_1 = \bI$ or $\bQ_2 = \bI$, respectively.
Note that the diagonal elements of the
cross-correlation matrix $\bPsi$ need not be equal to 1.

Furthermore, since $\bx = \bW^{-1} \bz$
each $x_j$ is perfectly explained by a linear combination of the uncorrelated
$z_1, \ldots, z_d$, and hence the squared multiple correlation between $x_j$
and $\bz$ equals 1. Thus, the \emph{column} sum over the squared
cross-correlations $\sum_{i=1}^d \psi_{ij}^2$ is always~1. In matrix notation,
$\diag( \bPsi^T \bPsi ) = 
\diag( \bRho^{1/2}  \bQ_2^T  \bQ_2 \bRho^{1/2} ) =
\diag( \bRho ) = (1, \ldots, 1)^T$. 
In contrast, the \emph{row} sum of over the squared
cross-correlations  $\sum_{j=1}^d \psi_{ij}^2$ varies for different whitening procedures, and is, as we will see below, highly informative for choosing 
relevant transformations.
}

\section{Five natural whitening procedures}

In practical application of whitening there are { a handful of
sphering procedures that are} most commonly used
\citep[e.g.][]{LiZhang1998}.
{ Accordingly, in \tabcite{tab:sphering} we describe the properties of
 five whitening transformations, listing the respective
sphering matrix $\bW$, the associated rotation matrices $\bQ_1$ and $\bQ_2$,
and the resulting cross-covariances $\bPhi$  and cross-correlations $\bPsi$.
All five methods are natural whitening procedures arising from specific
constraints on $\bPhi$ or $\bPsi$, as we will show further below.}

\begin{table}[b]
\caption{Five natural whitening transformations and their properties.}
\centering
\begin{tabular}{l rrr l l}
\toprule
                & Sphering & Cross- & Cross- &  Rotation & Rotation \\

                & matrix & covariance & correlation &  matrix & matrix \\
                & $\bW$  & $\bPhi$    & $\bPsi$      & $\bQ_1$ & $\bQ_2$ \\
\midrule      
ZCA & $\bSigma^{-1/2}$                  & $\bSigma^{1/2}$                 & $\bSigma^{1/2} \bV^{-1/2}$         & $\bI$                 & $\bA^T$\\
PCA             & $\bLambda^{-1/2} \bU^T$           & $\bLambda^{1/2} \bU^T$          & $\bLambda^{1/2} \bU^T \bV^{-1/2}$  & $\bU^T$               & $\bU^T \bA^T$\\
Cholesky        & $\bL^T$                           & $\bL^T \bSigma$                 & $\bL^T \bSigma \bV^{-1/2}$         & $\bL^T \bSigma^{1/2}$ & $\bL^T \bV^{1/2} \bRho^{1/2}$ \\
ZCA-cor         & $\bRho^{-1/2} \bV^{-1/2}$         & $\bRho^{1/2} \bV^{1/2}$         & $\bRho^{1/2}$                      & $\bA$                 & $\bI$\\
PCA-cor         & $\bTheta^{-1/2} \bG^T \bV^{-1/2}$ & $\bTheta^{1/2} \bG^T \bV^{1/2}$ & $\bTheta^{1/2} \bG^T$              & $\bG^T \bA$           & $\bG^T$\\
\bottomrule
\end{tabular}
\label{tab:sphering}\\

\end{table}

The \emph{ZCA whitening} transformation employs the sphering matrix
\begin{equation}
\bW^{\text{ZCA}} = \bSigma^{-1/2}.
\end{equation}
where ZCA stands for ``zero-phase components analysis'' \citep{BS1997}.
This procedure is also known as \emph{Mahalanobis whitening}.
{With $\bQ_1 = \bI$} it is the unique sphering method 
with a symmetric whitening matrix.

{\emph{PCA whitening} is} based on scaled principal component analysis (PCA) and 
uses 
\begin{equation}
\bW^{\text{PCA}} = \bLambda^{-1/2} \bU^T
\label{eq:pcawhitening} 
\end{equation}
 \citep[e.g.][]{Friedman1987}.  
 This transformation first rotates the variables using the eigenmatrix of the covariance $\bSigma$
as is done in standard PCA. This results in orthogonal components, but with in general different variances.
 To achieve whitened data the rotated variables are then scaled by the square root of the eigenvalues $\bLambda^{-1/2}$. { PCA whitening } is probably the most widely 
applied whitening procedure due to its connection with PCA.

It can be seen that the PCA and ZCA whitening transformations are related by a rotation $\bU$,
so ZCA whitening can be { interpreted} as rotation followed by scaling followed by the rotation $\bU$ back to the original coordinate system.
The ZCA and the PCA sphering methods both naturally follow the polar decomposition of \eqcite{eq:polardecomp},
with $\bQ_1$ equal to $\bI$ and $\bU^T$ respectively.

Due to the sign ambiguity of eigenvectors $\bU$
the PCA whitening matrix given by \eqcite{eq:pcawhitening} is still not unique.
However, adjusting column signs in $\bU$ such that  $\diag(\bU) > 0$, i.e.\ that all diagonal elements are positive, results in the unique PCA whitening transformation with positive diagonal
cross-covariance~$\bPhi$ and cross-correlation~$\bPsi$ (cf. \tabcite{tab:sphering}).

{ Another widely known procedure is \emph{Cholesky whitening} which is based on
Cholesky factorization of the precision matrix $\bL \bL^T = \bSigma^{-1}$.  This leads to } the
 sphering matrix
\begin{equation} 
\bW^{\text{Chol}} = \bL^T
\end{equation} 
 where $\bL$
is the unique lower triangular matrix with positive diagonal values.  
 The same matrix $\bL$ can also be obtained 
from a QR decomposition of $\bW^{\text{ZCA}} = (\bSigma^{1/2} \bL) \,  \bL^T$.

{ A further approach is the \emph{ZCA-cor whitening} transformation, 
which is used, e.g., in the CAT (correlation-adjusted $t$-score) and CAR (correlation-adjusted marginal correlation)
variable importance and variable selection statistics
\citep{ZS09,AS2010,ZS2011,Zuber+2012}.  ZCA-cor whitening } employs 
\begin{equation}
\bW^{\text{ZCA-cor}} = \bRho^{-1/2} \bV^{-1/2} 
\label{eq:catcar-w}
\end{equation}
 as its sphering matrix.
{ It }arises from first 
standardizing the random variable by multiplication with $\bV^{-1/2}$ and subsequently 
employing ZCA whitening based on the correlation rather than covariance matrix.
{ The resulting whitening matrix $\bW^{\text{ZCA-cor}}$  differs 
from  $\bW^{\text{ZCA}}$, and unlike the latter it is in general} asymmetric.

In a similar fashion, { \emph{PCA-cor whitening} is conducted by
 applying PCA whitening to standardized variables. This approach uses }
\begin{equation}
\bW^{\text{PCA-cor}} = \bTheta^{-1/2} \bG^T \bV^{-1/2}
\label{eq:pcacor-w}
\end{equation}  
{ as its sphering matrix.}
Here, the standardized variables are rotated by the eigenmatrix of the correlation matrix,
followed by scaling using the correlation eigenvalues.
Note that $\bW^{\text{PCA-cor}}$ { differs } from $\bW^{\text{PCA}}$.

PCA-cor whitening  has the same relation to the ZCA-cor transformation as does 
PCA whitening to the ZCA transformation.  Specifically, 
ZCA-cor whitening can be interpreted as PCA-cor whitening followed by a rotation $\bG$ back to the frame of the standardized variables.
{ Both the ZCA-cor and the PCA-cor transformation naturally follow the  decomposition of \eqcite{eq:Q2sphering},
with $\bQ_2$ equal to $\bI$ and $\bG^T$ respectively.}

Similarly as in PCA whitening, the PCA-cor whitening matrix given by \eqcite{eq:pcacor-w} is subject to sign ambiguity of the eigenvectors in $\bG$.
As above, setting $\diag(\bG) > 0$ leads to the unique PCA-cor whitening transformation
with positive diagonal
cross-covariance~$\bPhi$ and cross-correlation~$\bPsi$ (cf. \tabcite{tab:sphering}).

Finally, we may also apply the Cholesky whitening transformation to standardized variables.
{ However, this does not lead to a new whitening procedure, as} the
 resulting sphering matrix remains 
identical to $\bW^{\text{Chol}}$ since the Cholesky factor of the inverse 
correlation matrix $\bRho^{-1}$ is $\bV^{1/2} \bL$,
{ and therefore }
$\bW^{\text{Chol-cor}} = (\bV^{1/2} \bL )^T \bV^{-1/2} = \bL^T = \bW^{\text{Chol}}$.

\section{Optimal whitening} 

We now demonstrate how an optimal sphering matrix $\bW$, and hence an optimal
whitening approach, can be identified by evaluating  suitable objective functions 
{ computed } from the cross-covariance $\bPhi$ and cross-correlation $\bPsi$.
{ Intriguingly, for each of the five natural whitening transforms 
listed in \tabcite{tab:sphering}
we find a corresponding optimality criterion.
 }

\subsection{ZCA-Mahalanobis whitening}

In many applications of whitening { it is desirable} 
to remove correlations { with minimal additional adjustment, with the aim } 
that the transformed { variable $\bz$} remains as similar as possible 
to the original vector $\bx$.

One possible implementation of this idea is to find the whitening transformation that 
minimizes the total squared distance between the original and whitened variables \citep[e.g.][]{EldarOppenheim2003}.  { Using mean-centered random vectors $\bz_c$ and $\bx_c$
with $\expect(\bz_c) = 0$ and $\expect(\bx_c)=0$} this least squares objective can be expressed as
\begin{equation}
\begin{split}
\expect\left( (\bz_c -\bx_c)^T  (\bz_c -\bx_c) \right) &= 
\trace(\bI) - 2 \, \expect\left( \trace\left( \bz_c \bx_c^T \right) \right) + \trace(\bSigma)\\ &= d - 2\trace(\bPhi) + \trace(\bV) \, .
\label{eq:lsdist}
\end{split}
\end{equation}
{ Since the dimension $d$ and sum of the variances $\trace(\bV) = \sum_{i=1}^d \sigma^2_i$ do not depend on the whitening matrix $\bW$ 
minimizing \eqcite{eq:lsdist} is equivalent to  maximizing} the trace of the cross-covariance matrix
\begin{equation}
\trace( \bPhi) = \sum_{i=1}^d \cov(z_i, x_i) =  \trace\left( \bQ_1 \bSigma^{1/2}  \right)   \equiv g_1(\bQ_1) \, .
\end{equation}

\begin{prop}
Maximization of $g_1(\bQ_1)$ uniquely determines the optimal whitening matrix to be
the symmetric sphering matrix $\bW^{\text{ZCA}}$.
\end{prop}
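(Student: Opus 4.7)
The plan is to exploit the fact that $\bSigma^{1/2}$ is symmetric positive definite and reduce the problem to a standard fact about orthogonal matrices: for fixed positive weights $\lambda_1^{1/2}, \ldots, \lambda_d^{1/2}$, the diagonal of an orthogonal matrix maximizes the weighted sum $\sum_i m_{ii} \lambda_i^{1/2}$ only when that matrix is the identity.

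Concretely, I would proceed as follows. First, insert the eigendecomposition $\bSigma^{1/2} = \bU \bLambda^{1/2} \bU^T$ (where $\bLambda^{1/2}$ is diagonal with strictly positive entries because $\bSigma$ is positive definite). Using the cyclic property of the trace,
\begin{equation*}
g_1(\bQ_1) = \trace(\bQ_1 \bU \bLambda^{1/2} \bU^T) = \trace(\bM \bLambda^{1/2}) = \sum_{i=1}^d m_{ii}\,\lambda_i^{1/2},
\end{equation*}
where $\bM = \bU^T \bQ_1 \bU$ is itself orthogonal. Second, since the columns of an orthogonal matrix are unit vectors, each diagonal entry satisfies $|m_{ii}| \le 1$. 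Together with $\lambda_i^{1/2} > 0$ this yields the upper bound $g_1(\bQ_1) \le \sum_i \lambda_i^{1/2} = \trace(\bSigma^{1/2})$, with equality forcing $m_{ii} = 1$ for every $i$.

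Third, I would argue uniqueness. If every diagonal entry of an orthogonal $\bM$ equals $1$, then each column is a unit vector with $i$-th coordinate equal to $1$, which forces all off-diagonal entries in that column to vanish; hence $\bM = \bI$. Unwinding the conjugation gives $\bQ_1 = \bU \bI \bU^T = \bI$, and the corresponding sphering matrix is $\bW = \bQ_1 \bSigma^{-1/2} = \bSigma^{-1/2} = \bW^{\text{ZCA}}$.

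The only delicate point is the uniqueness claim, which could in principle be threatened by repeated eigenvalues of $\bSigma$; however, the bound $|m_{ii}| \le 1$ is strict unless the $i$-th column of $\bM$ is $\pm e_i$, and positivity of every $\lambda_i^{1/2}$ (not merely of the distinct eigenvalues) rules out the minus sign at each index independently. Thus no ambiguity arises from the eigenstructure of $\bSigma$, and the maximizer is genuinely unique.
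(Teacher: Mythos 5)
Your proof is correct and follows essentially the same route as the paper's: conjugate by $\bU$ to reduce the trace to a weighted sum of diagonal entries of an orthogonal matrix, bound those entries by $1$, and observe that equality forces the identity, hence $\bQ_1 = \bI$ and $\bW = \bSigma^{-1/2}$. Your added detail on why all unit diagonal entries force the off-diagonals to vanish, and the remark that repeated eigenvalues cause no ambiguity, merely make explicit what the paper leaves implicit.
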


\begin{proof}
$g_1(\bQ_1) = \trace( \bQ_1 \bU \bLambda^{1/2} \bU^T ) = \trace( \bLambda^{1/2} \bU^T \bQ_1 \bU  )
 \equiv \trace( \bLambda^{1/2} \bB ) = \sum_i \bLambda_{ii}^{1/2} B_{ii}$ since $\bLambda$ is diagonal.
As $\bQ_1$ and $\bU$ are both orthogonal $\bB \equiv \bU^T \bQ_1 \bU$ is 
also orthogonal. This implies diagonal entries $B_{ii}\leq 1$, with equality 
signs for all $i$ occurring only if $\bB = \bI$, hence the maximum of $g_1(\bQ_1)$
is assumed at $\bB=\bI$, or equivalently at $\bQ_1 = \bI$.  From \eqcite{eq:polardecomp} it follows that
the corresponding optimal sphering matrix is
$\bW=\bSigma^{-1/2}=\bW^{\text{ZCA}}$.
\end{proof}
For related proofs see also \citet{Johnson1966},  \citet[][p. 412]{Gen1993} and
\citet[][p. 789]{Garthwaite2012}.

As a result, we { find} that \emph{ZCA-Mahalanobis whitening} is 
the unique procedure that \emph{maximizes the average 
cross-covariance} between each component of the whitened and original vectors.
Furthermore, with $\bQ_1=\bI$ it is also the unique
whitening procedure with a symmetric cross-covariance matrix $\bPhi$.

\subsection{ZCA-cor whitening}
{In the optimization using \eqcite{eq:lsdist} the underlying similarity measure 
is the cross-covariance between the whitened and original random variables.  
This results in an optimality criterion that depends on the variances and hence on the 
scale of the original variables.
An alternative scale-invariant objective 
can be constructed by comparing the centered} whitened variable with the centered \emph{standardized}  vector
$\bV^{-1/2} \bx_c$.  { This leads to the minimization of }
\begin{equation}
\expect\left( (\bz_c - \bV^{-1/2}\bx_c)^T  (\bz_c - \bV^{-1/2} \bx_c) \right) = 
2 d - 2\trace(\bPsi)  \,.
\end{equation}
{ Equivalently, we can maximize instead} the trace of the cross-correlation matrix
\begin{equation}
\trace(\bPsi) = 
 \sum_i^d \cor(z_i, x_i) = \trace\left(\bQ_2  \bRho^{1/2} \right) \equiv g_2(\bQ_2) \, .
\label{eq:catcar-func}
\end{equation}

\begin{prop}
Maximization of $g_2(\bQ_2)$  uniquely determines the whitening matrix  to be
the asymmetric sphering matrix $\bW^{\text{ZCA-cor}}$.   
\end{prop}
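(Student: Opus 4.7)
The plan is to mirror the argument used for Proposition~1, simply replacing the eigendecomposition of the covariance matrix by that of the correlation matrix. The objective $g_2(\bQ_2) = \trace(\bQ_2 \bRho^{1/2})$ has exactly the same structural form as $g_1$, with $\bRho$ playing the role of $\bSigma$, so the same Kantorovich-type trace inequality should give the answer in one step.

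First I would substitute $\bRho^{1/2} = \bG \bTheta^{1/2} \bG^T$ into $g_2$ and use the cyclic property of the trace to write
\begin{equation*}
g_2(\bQ_2) = \trace\bigl(\bQ_2 \bG \bTheta^{1/2} \bG^T\bigr) = \trace\bigl(\bTheta^{1/2} \bB\bigr), \qquad \bB \equiv \bG^T \bQ_2 \bG.
\end{equation*}
Since $\bQ_2$ and $\bG$ are both orthogonal, $\bB$ is orthogonal; hence each diagonal entry satisfies $B_{ii} \le 1$, with simultaneous equality for all $i$ occurring only when $\bB = \bI$. Because $\bTheta^{1/2}$ is diagonal with strictly positive entries, the expression $\sum_i \bTheta_{ii}^{1/2} B_{ii}$ is therefore uniquely maximized at $\bB = \bI$, i.e.\ at $\bQ_2 = \bI$.

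Plugging $\bQ_2 = \bI$ into the decomposition of \eqcite{eq:Q2sphering} then yields the optimal whitening matrix $\bW = \bRho^{-1/2} \bV^{-1/2} = \bW^{\text{ZCA-cor}}$. I would close by remarking that, consistently with \tabcite{tab:sphering}, this sphering matrix is in general asymmetric (unlike $\bW^{\text{ZCA}}$), and that $\bQ_2 = \bI$ is equivalent to the cross-correlation matrix $\bPsi = \bRho^{1/2}$ being symmetric, which is the natural analogue of the symmetry characterization obtained in Proposition~1 for the cross-covariance $\bPhi$.

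There is no real obstacle here: the argument is essentially verbatim the one used for Proposition~1, and the only thing worth checking carefully is the uniqueness clause, i.e.\ that an orthogonal matrix whose diagonal is entirely $1$ must be the identity. This follows because an orthogonal matrix has rows of unit Euclidean norm, so a diagonal entry of $1$ forces every other entry in that row (and column) to vanish.
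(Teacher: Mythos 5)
Your proposal is correct and follows essentially the same route as the paper's proof: both diagonalize $\bRho^{1/2}$ via $\bG \bTheta^{1/2} \bG^T$, reduce the objective to $\sum_i \bTheta_{ii}^{1/2} B_{ii}$ with $\bB = \bG^T \bQ_2 \bG$ orthogonal, and conclude $\bQ_2 = \bI$ by the same bound on diagonal entries used in Proposition~1. Your added justification of the uniqueness step (a unit diagonal entry of an orthogonal matrix forces the rest of its row and column to vanish) is a welcome detail the paper leaves implicit.
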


\begin{proof}
Completely analogous to Proposition 1, we can write 
$g_2(\bQ_2) = \trace( \bQ_2 \bG \bTheta^{1/2} \bG^T ) = \sum_i \bTheta_{ii}^{1/2} C_{ii}$
where $\bC \equiv \bG^T \bQ_2 \bG$ is orthogonal.
By the same argument as before it follows
that $\bQ_2 = \bI$ maximizes $g_2(\bQ_2)$.
From \eqcite{eq:Q2sphering} it follows that $\bW  = \bRho^{-1/2} \bV^{-1/2} = \bW^{\text{ZCA-cor}}$.
\end{proof}

As a result, we identify \emph{ZCA-cor whitening} as the unique procedure 
{ that  \emph{ensures}} \emph{that the components of the whitened vector
$\bz$ remain maximally correlated with the corresponding
components of the original variables $\bx$}.
In addition, with $\bQ_2=\bI$ it is also the unique whitening transformation
exhibiting a symmetric cross-correlation matrix~$\bPsi$.

\subsection{PCA whitening}

Another frequent { aim} in whitening is the generation of new 
{ uncorrelated} variables $\bz$
that are useful for dimension reduction and { data compression.  In other words, we would like to 
construct components $z_1, \ldots, z_d$ such that the first few components in $\bz$ represent as much as possible
the variation present in the all original variables $x_1, \ldots, x_d$.

One way to formalize this is to use 
the row sum of squared cross-covariances 
$\phi_i = \sum_{j=1}^d \phi_{ij}^2 = \sum_{j=1}^d \cov(z_i, x_j)^2$ 
between each individual $z_i$
and all $x_j$ as a measure of how effectively each $z_i$ integrates, or compresses, the original variables.
Note that here, unlike in ZCA-Mahalanobis whitening,
the objective function links
each component in $\bz$ simultaneously with all components in $\bx$.
In vector notation the $\phi_i$ can be more elegantly written as}
\begin{equation}
(\phi_1, \ldots, \phi_d)^T = \diag\left( \bPhi \bPhi^T \right) = \diag\left(\bQ_1 \bSigma \bQ_1^T \right) 
\equiv \bh_1(\bQ_1)\,.
\end{equation}
{ Our aim is to find a whitened vector $\bz$ such that the $\phi_i$ are maximized 
with $\phi_i \geq \phi_{i+1}$. }

\begin{prop}
Maximization of $\bh_1(\bQ_1)$ subject to monotonically decreasing $\phi_i$ is achieved by the whitening matrix 
$\bW^{\text{PCA}}$.
\end{prop}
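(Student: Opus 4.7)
The plan is to reparametrize through the orthogonal matrix $\bM = \bQ_1 \bU$ and reduce the problem to a doubly stochastic optimization over the eigenvalues of $\bSigma$. Concretely, since $\bSigma = \bU \bLambda \bU^T$, I would write
\begin{equation*}
\phi_i = (\bQ_1 \bSigma \bQ_1^T)_{ii} = (\bM \bLambda \bM^T)_{ii} = \sum_{j=1}^d M_{ij}^2\, \lambda_j,
\end{equation*}
and observe that since $\bM$ is orthogonal the entrywise squared matrix $(M_{ij}^2)$ is doubly stochastic. Thus each $\phi_i$ is a convex combination of the eigenvalues $\lambda_1 \geq \cdots \geq \lambda_d$, and $\sum_i \phi_i = \trace(\bSigma) = \sum_i \lambda_i$ is fixed.

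Next I would carry out the lexicographic maximization implied by the monotonicity constraint. For $\phi_1$: because $\phi_1 = \sum_j M_{1j}^2\lambda_j$ is a convex combination, $\phi_1 \leq \lambda_1$, with equality requiring $M_{1j}^2 = \delta_{1j}$ (in the generic case $\lambda_1 > \lambda_2$). Orthogonality of $\bM$ then forces $M_{i1} = 0$ for $i \geq 2$, so $\bM$ decouples into a $1 \times 1$ block of value $\pm 1$ and a $(d-1) \times (d-1)$ orthogonal block acting on the eigenvalues $\lambda_2, \ldots, \lambda_d$. Iterating this argument --- or equivalently applying it inductively on dimension --- yields $\phi_i = \lambda_i$ for every $i$, attained only when $\bM = \diag(\pm 1, \ldots, \pm 1)$. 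Back-substituting,
\begin{equation*}
\bQ_1 = \bM \bU^T = \bU^T \quad \text{(up to column signs of } \bU\text{)},
\end{equation*}
and hence from the polar decomposition \eqcite{eq:polardecomp},
\begin{equation*}
\bW = \bQ_1 \bSigma^{-1/2} = \bU^T \bU \bLambda^{-1/2} \bU^T = \bLambda^{-1/2} \bU^T = \bW^{\text{PCA}}.
\end{equation*}

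The main obstacle is the handling of the monotonicity constraint and of ties. The constraint $\phi_1 \geq \phi_2 \geq \cdots$ is what allows us to identify $\phi_i$ with $\lambda_i$ in the sorted order rather than any permutation; without it, any permutation matrix applied on the left of $\bU^T$ would also extremize the individual $\phi_i$. The sign indeterminacy in $\bM$ is precisely the familiar sign ambiguity of eigenvectors already discussed in the Notation section, and can be fixed (as the paper does elsewhere for PCA) by the convention $\diag(\bU) > 0$. If eigenvalues coincide, the $\bM$-block acting on the corresponding eigenspace is an arbitrary orthogonal matrix, but this is exactly the standard non-uniqueness of eigenvectors within a degenerate subspace and does not affect the resulting $\bW^{\text{PCA}}$.
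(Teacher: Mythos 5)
Your proof is correct and rests on the same core computation as the paper's: both reparametrize through the orthogonal matrix $\bD = \bQ_1\bU$ (your $\bM$) and reduce each $\phi_i$ to $\sum_j \lambda_j D_{ij}^2$, identifying the optimum at $\bQ_1 = \bU^T$ and hence $\bW = \bLambda^{-1/2}\bU^T = \bW^{\text{PCA}}$. The difference is one of rigor rather than route: the paper simply asserts that the vector-valued objective ``is maximized when $\bD = \bI$,'' without saying in what sense a $d$-vector of quantities with fixed sum $\trace(\bSigma)$ can be simultaneously maximized. You fill exactly this gap by observing that $(D_{ij}^2)$ is doubly stochastic, so each $\phi_i$ is a convex combination of the eigenvalues (a Schur--Horn/majorization fact), and then interpreting the constraint $\phi_1 \geq \phi_2 \geq \cdots$ as licensing a lexicographic maximization: $\phi_1 \leq \lambda_1$ with equality forcing the first row and column of $\bD$ to collapse, and induction on the remaining block gives $\phi_i = \lambda_i$ with $\bD$ a signed identity. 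Your explicit treatment of the sign indeterminacy (absorbed into the convention $\diag(\bU) > 0$) and of repeated eigenvalues is also more careful than the paper, which leaves both implicit in this proof. In short: same decomposition and same key identity, but your version makes precise the optimality claim that the paper states only informally.
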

{
\begin{proof}
The vector
$\bh_1(\bQ_1)$ can be written as $\diag( \bQ_1 \bSigma \bQ_1^T) = \diag(\bQ_1 \bU\bLambda \bU^T  \bQ_1^T )$.
Setting $\bQ_1=\bU^T$ we arrive at $\bh_1(\bU^T) = \diag(\bLambda) $, i.e. for this choice
the $\phi_i$ corresponding to each component of $z_i$ are equal to the corresponding
eigenvalues of $\bSigma$.  As the eigenvalues are already sorted in decreasing order,
we  find (cf. \tabcite{tab:sphering}) that whitening with $\bW^{\text{PCA}}$
leads to a sphered variable $\bz$ with monotonically decreasing $\phi_i$.
For general $\bQ_1$ the $i$-th element of  $\bh_1(\bQ_1)$ is $\sum_j \bLambda_{jj} D_{ij}^2$ 
where $\bD \equiv \bQ_1 \bU$ is orthogonal.  
This is maximized when $\bD = \bI$, or equivalently, $\bQ_1=\bU^T$.
\end{proof}

As a result, \emph{PCA whitening} is singled out as the unique sphering procedure
that \emph{maximizes the integration, or compression, of all components of the
 original vector $\bx$ in each component 
of the sphered vector $\bz$
 based on the cross-covariance~$\bPhi$ as underlying measure.}
Thus,} the fundamental property of PCA that principal components
are optimally ordered with respect to dimension reduction \citep{Jolliffe2002}
{ carries over also to PCA whitening.}

\subsection{PCA-cor whitening}

For reasons { of scale-invariance } we prefer 
to optimize cross-correlations rather than cross-covariances for whitening with 
compression in mind.  This leads to  { the row sum of squared cross-correlation 
$\psi_i = \sum_{j=1}^d \psi_{ij}^2 = \sum_{j=1}^d \cor(z_i, x_j)^2$
as measure of integration and compression, and correspondingly to the objective function}
\begin{equation}
(\psi_1, \ldots, \psi_d)^T =  \diag\left( \bPsi \bPsi^T \right) = \diag\left(\bQ_2 \bRho \bQ_2^T \right) 
= \bh_2(\bQ_2)\, .
\label{eq:pcacor-func}
\end{equation}

\begin{prop}
{ Maximization of $\bh_2(\bQ_2)$ subject to monotonically decreasing $\psi_i$  is achieved by using}
$\bW^{\text{PCA-cor}}$ as the sphering matrix.
\end{prop}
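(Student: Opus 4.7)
The plan is to mirror the proof of Proposition~3 almost verbatim, replacing the covariance eigendecomposition $\bSigma = \bU \bLambda \bU^T$ with the correlation eigendecomposition $\bRho = \bG \bTheta \bG^T$, and using the representation \eqcite{eq:Q2sphering} in place of the polar decomposition \eqcite{eq:polardecomp}.

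First I would rewrite $\bh_2(\bQ_2) = \diag\bigl( \bQ_2 \bG \bTheta \bG^T \bQ_2^T \bigr)$. Setting $\bQ_2 = \bG^T$ collapses this to $\diag(\bTheta)$, so the $\psi_i$ equal the correlation eigenvalues, which by convention are sorted in decreasing order; hence this candidate automatically satisfies the monotonicity constraint. Consulting \tabcite{tab:sphering}, the choice $\bQ_2 = \bG^T$ corresponds exactly to $\bW^{\text{PCA-cor}} = \bTheta^{-1/2} \bG^T \bV^{-1/2}$, so this identifies the claimed optimizer.

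For optimality in general, I would introduce $\bD \equiv \bQ_2 \bG$, which is orthogonal as a product of orthogonal matrices (playing the role analogous to $\bD = \bQ_1 \bU$ in Proposition~3). The $i$-th entry of $\bh_2(\bQ_2)$ then becomes $\sum_j \bTheta_{jj} D_{ij}^2$, where each row of $\bD$ is a unit vector. Maximizing the first component concentrates its weight on the largest eigenvalue; iterating this argument under the monotonic-decrease constraint forces $\bD = \bI$, equivalently $\bQ_2 = \bG^T$.

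The main subtlety to watch, as with PCA whitening, is the sign ambiguity of the eigenvectors in $\bG$: uniqueness of $\bW^{\text{PCA-cor}}$ and the positive diagonal of $\bPhi$ and $\bPsi$ require fixing $\diag(\bG) > 0$, as discussed around \eqcite{eq:pcacor-w}. Otherwise the argument transfers without modification from the covariance setting, and no genuinely new obstacle arises beyond those already handled in Proposition~3.
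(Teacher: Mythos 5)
Your proof is correct and follows essentially the same route as the paper, which simply states that the argument is ``analogous to the proof of Proposition~3'' with $\bQ_2=\bG^T$ yielding optimal, decreasing $\psi_i$ and \eqcite{eq:Q2sphering} giving $\bW^{\text{PCA-cor}}$. You merely spell out the details the paper leaves implicit (the orthogonal matrix $\bD=\bQ_2\bG$, the greedy argument under the monotonicity constraint, and the sign convention $\diag(\bG)>0$), all of which match the paper's treatment of Propositions~3 and~4.
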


\begin{proof}
Analogous to the proof of Proposition~3 we find $\bQ_2=\bG^T$ to { yield optimal and decreasing
$\psi_i$ } and 
with \eqcite{eq:Q2sphering} we arrive at
$\bW=\bTheta^{-1/2} \bG^T  \bV^{-1/2} = \bW^{\text{PCA-cor}}$.
\end{proof}

{
Hence, the \emph{PCA-cor whitening} transformation is the unique transformation that
\emph{maximizes the integration, or compression, of all components of the
 original vector $\bx$ in each component 
of the sphered vector $\bz$
 employing the cross-correlation~$\bPsi$ as underlying measure}.

}

\subsection{Cholesky whitening}

{ Finally, we investigate the connection between Cholesky whitening and corresponding characteristics of the 
cross-covariance and cross-correlation matrices.
Unlike the other four whitening methods listed in \tabcite{tab:sphering}, 
which result from optimization, Cholesky whitening is due to a symmetry constraint.

Specifically, the whitening matrix $\bW^{\text{Chol}}$  leads to a cross-covariance matrix $\bPhi$ that is lower-triangular with positive diagonal elements as well as to a cross-correlation matrix $\bPsi$ with the same properties.
This is a consequence of the Cholesky factorization  with $\bL$ being subject to the same 
constraint.   Crucially, as $\bL$ is unique the converse argument is valid as well, and hence \emph{Cholesky whitening} is 
the unique whitening procedure that  \emph{results from lower-triangular positive diagonal cross-covariance and cross-correlation matrices.}

A consequence of using Cholesky factorization for whitening  
is that we implicitly assume an ordering of the variables. 
This can be useful specifically in time course analysis
to account for auto-correlation \citep[cf.][and references therein]{Pourahmadi2011}.
}

\section{Application} 

{
\subsection{Data-based whitening}

In the sections above, we have discussed the theoretical background of whitening
in terms of random variables $\bx$ and $\bz$ and using the population covariance $\bSigma$
to guide the construction of a suitable sphering matrix $\bW$.

In practice, however, we frequently  need to whiten data rather than random variables.
In this case we have an $n \times d$ data matrix  $\bX = (x_{ki}) $ whose \emph{rows} are assumed to 
be drawn from a distribution with expectation $\bmu$ and covariance matrix $\bSigma$.
In this setting the transformation of \eqcite{eq:whitening} from original to whitened data matrix
becomes  $\bZ = \bX \bW^T$.

A further complication is that the covariance matrix $\bSigma$ is often unknown.
Accordingly, it needs to be learned from data, either from $\bX$ or from another suitable data set,
yielding a covariance matrix estimate $\widehat{\bSigma}$.  Typically, for large sample size
$n$ and small dimension~$d$  the standard unbiased empirical covariance $\bS = (s_{ij})$ with $s_{ij} = \frac{1}{n-1} \sum_{k=1}^n (x_{ki}-\bar{x}_i) (x_{kj}-\bar{x}_j) $ is used.
In high-dimensional cases with $p > n$ the empirical estimator breaks down, and the covariance matrix needs to be estimated by a suitable regularized method instead \citep[e.g.][]{SS05c,Pourahmadi2011}.   
Finally, from the spectral decomposition of the estimated covariance or corresponding correlation matrix we then obtain the desired estimated whitening matrix~$\widehat{\bW}$.

\subsection{Iris flower data example}
}

\begin{table}[t]
\caption{Whitening transforms applied to the iris flower data set.}
\centering
\begin{tabular}{l rrrrr }
\toprule
                      & ZCA & PCA & Cholesky & ZCA-cor & PCA-cor \\
\midrule
$\widehat{\cor}(z_1, x_1)$ & 0.7137 & 0.8974  & 0.3760 & 0.8082 & 0.8902 \\
$\widehat{\cor}(z_2, x_2)$ & 0.9018 & 0.8252  & 0.8871 & 0.9640 & 0.8827 \\
$\widehat{\cor}(z_3, x_3)$ & 0.8843 & 0.0121  & 0.2700 & 0.6763 & 0.0544 \\           
$\widehat{\cor}(z_4, x_4)$ & 0.5743 & 0.1526  & 1.0000 & 0.7429 & 0.0754 \\
\midrule
$\trace (\widehat{\bPhi})$ & {\bf 2.9829} & 1.2405 & 1.9368 & {\it 2.8495} & 1.2754 \\
$\trace (\widehat{\bPsi})$ & {\it 3.0742} & 1.8874 & 2.5331 & {\bf 3.1914} & 1.9027 \\
$\max \diag(\widehat{\bPhi}\widehat{\bPhi}^T)$ & 3.1163 & {\bf 4.2282} & 3.9544 & 1.7437       & {\it 4.1885} \\
$\max \diag(\widehat{\bPsi}\widehat{\bPsi}^T)$ & 1.9817       & {\it 2.8943}       & 2.7302 & 1.0000       & {\bf 2.9185}\\
\bottomrule
\end{tabular}
\\Bold font indicates best whitening transformation, and italic font the second best method for each considered criterion (lines 5--8).
\label{tab:iris}\\
\end{table}

For an illustrative comparison of the five natural whitening transforms
{ discussed in this paper and listed in \tabcite{tab:sphering}
we applied them on the well-known iris flower data set of Anderson reported in \citet{Fisher1936}},
which comprises
$d=4$ correlated variables ($x_1$: sepal length,  
$x_2$: sepal width, 
$x_3$: petal length,
$x_4$: petal width)
and  $n=150$ observations.

The results are shown in \tabcite{tab:iris} with all estimates 
based on the empirical covariance $\bS$. For the PCA and PCA-cor whitening transformation we have
set $\diag(\bU)>0$ and $\diag(\bG)>0$, respectively.
The upper half of  \tabcite{tab:iris} shows the estimated cross-correlations between each component of the  whitened and original vector for the five methods, and the lower
half the values of the various objective functions discussed above.

As expected, the ZCA and the ZCA-cor whitening produce sphered variables that are most correlated  to the original data on a component-wise level, with the former achieving the best fit for the covariance-based  and the latter for the correlation-based objective.

In contrast, the PCA and PCA-cor methods are best at producing whitened variables
that are maximally simultaneously linked with all components  of the original variables.  Consequently, as can be seen from the top half of \tabcite{tab:iris}, 
for PCA and PCA-cor whitening only the first two components { $z_1$ and $z_2$ are highly correlated with their respective counterparts
$x_1$ and $x_2$, whereas the subsequent pairs $z_3, x_3$ and $z_4, x_4$ are effectively uncorrelated.
Furthermore, the last line of \tabcite{tab:iris} shows that PCA-cor whitening achieves
higher maximum total squared correlation } of the first component $z_1$ with all components of $\bx$
than PCA whitening, indicating better compression.

{ Interestingly, Cholesky whitening always assumes third place in the rankings, either behind ZCA and ZCA-cor whitening, or behind PCA and PCA-cor whitening.
Moreover, it is the only approach where by construction one pair ($z_4, x_4$)  perfectly correlates between whitened and original data.}

\section{Conclusion}

{ In this note we have investigated linear transformations for whitening of random variables.
These methods are commonly employed in data analysis for preprocessing and to
facilitate subsequent analysis.

In principle, there are infinitely many possible whitening procedures
all satisfying the fundamental constraint of \eqcite{eq:wconstraint} for the underlying
whitening matrix.}
 However, as we have demonstrated here, the rotational freedom inherent in whitening can be
broken by considering  cross-covariance $\bPhi$ and
cross-correlations $\bPsi$ between whitened and original variables.

Specifically, we have { studied} five natural whitening transforms,
cf.~\tabcite{tab:sphering}, all of which
can be interpreted as either optimizing a suitable function of  $\bPhi$ or $\bPsi$, or 
satisfying a symmetry constraint on $\bPhi$ or $\bPsi$.  
{ As a result, this not only leads to a better understanding of the differences among whitening
methods, but also enables an informed choice.

In particular, selecting a suitable whitening transformation depends on the context of application,
specifically whether minimal adjustment or compression of data is desired.  In the former  the whitened variables
remain highly correlated to the original variables, and thus maintain their original interpretation.
This is advantageous for example in the context of variable selection where one would like to understand
the resulting selected submodel.  In contrast, in a compression 
context the whitened variables by construction bear no interpretable relation to the original data but instead
reflect their intrinsic effective dimension.

In general, we advocate using scale-invariant optimality functions and thus recommend
using cross-correlation $\bPsi$ as a basis for optimization. 
Consequently, we  particularly endorse two specific whitening approaches.
If the aim is to obtain sphered variables that are maximally similar to the original 
ones, we suggest to employ  the ZCA-cor whitening procedure 
of \eqcite{eq:catcar-w}.  Conversely,  if maximal compression 
is desirable we recommend to use the PCA-cor whitening approach of \eqcite{eq:pcacor-w}.}


\bibliographystyle{apalike}
\bibliography{preamble,stats,strimmer}

\end{document}